\newtheorem{theorem}{Theorem}[section]
\numberwithin{equation}{section}
\begin{document}

\title{\bf {On Darboux transformations for the derivative nonlinear Schr\"{o}dinger equation}}
\author[1]{\bf  Jonathan J.C. Nimmo}
\author[1,2]{\bf Halis Yilmaz}
\affil[1]{School of Mathematics and Statistics, University of Glasgow, Glasgow G12 8QW, UK}
\affil[2]{Department of Mathematics, University of Dicle, 21280 Diyarbakir, Turkey}
\date{}

\maketitle

\begin{abstract}
We consider Darboux transformations for the derivative nonlinear Schr\"{o}dinger equation.
 A new theorem for Darboux transformations of operators with no derivative term are presented and proved.
 The solution is expressed in quasideterminant forms. Additionally, the parabolic and soliton solutions
 of the derivative nonlinear Schr\"{o}dinger equation  are given as explicit examples.
\end{abstract}

\quad{\text{\it{Keywords:}} Derivative nonlinear Schr\"{o}dinger equation; Darboux transformation;  Quasideterminants.}

\quad{\text{\it{2010 Mathematics Subject Classification:}} 35C08, 35Q55, 37K10, 37K35}


\section{Introduction}

The derivative nonlinear Schr\"{o}dinger equation (DNLS), also called the Kaup-Newell (KN) equation,
\begin{eqnarray}\label{DNLS}
  iq_t+q_{xx}=i(|q|^2 q)_x
\end{eqnarray}
is completely integrable and is an important model in mathematical physics, especially in space plasma physics and nonlinear optics \cite{Agr,And,Ich,Mj,Rud,Tz}.
Kaup and Newell \cite{KN} solved the initial value problem for the DNLS equation using the inverse scattering method.

Darboux transformations are an important tool for studying the solutions of integrable systems.
They provide a universal approach that will bring together and extend a number of disparate results connected with
the nonlinear Schr\"{o}dinger (NLS) equation and its cousin the derivative NLS equation (DNLS).
In recent years, there has been some interest in solutions of the DNLS equation obtained by means of \textit{Darboux-like} transformations \cite{Steudel,Xia,Xu,Zang}. These solutions are often written in terms of determinants with a complicated structure.
Here, under a gauge transformation, a one-step Darboux transformation of the KN system (\ref{DNLS1}--\ref{DNLS2}) is
constructed by finding a $2 \times 2$ trial matrix so that the KN spectral problem (Lax pair) \cite{KN} is covariant.
Then, the determinant representations of $n$-fold Darboux transformation are obtained by stating and proving of sequence of theorems.
These determinants are expressed in terms of  solutions (eigenfunctions) of the linear partial differential equations,
where the equations (\ref{DNLS1}) and (\ref{DNLS2}) are the integrable condition of this linear KN system.
The important point to note here is that Steudel \cite{Steudel} has established a general formulae of the solution of the KN system
in terms of \textit{Vandermonde-type determinants} \cite{Steudel.M.N.}.
He used solutions of Riccati equations, which are replaced by solutions of the linear KN system, in order to construct solutions of the DNLS equation.
Steudel introduced his \textit{seahorse function} to write down general solutions of Riccati equations in terms of this auxiliary function.

On the other hand, in this present paper, we present a systematic approach to the construction of solutions of \eqref{DNLS}
by means of a standard Darboux transformation and written in terms of quasideterminants.
Quasideterminants have various nice properties which play important roles in constructing exact solutions of integrable systems.
The reader is referred to the original papers for a detailed description of quasideterminants \cite{Gelfand91,Gelfand05}
and their applications in integrable systems \cite{CJ,CHNimmo,CS,Hassan,LiJon}.

For the sake of clarity we emphasize that the strategy we employ here is based on Darboux's \cite{Darboux} and Matveev's \cite{Matveev,MS}  original ideas.
Therefore, our approach should be considered on its own merits.

This paper is organised as follows. In Section \ref{DTs}, we state and prove a theorem for Darboux transformations of operators with no derivative term. This has a similar structure to the standard theorem \cite{MS} for Darboux transformation of general operators.

In Sections \ref{DT1} and \ref{DT2}, we show how the quasideterminant solutions of the DNLS equation arise naturally from the Darboux transformation. Here, the quasideterminants are written in terms of solutions of Riccati sytems which arise from linear eigenvalue problems.

In Section \ref{PS}, parametric and soliton solutions of the DNLS equation are given for both zero and non-zero seed solutions.

\section{Derivative Nonlinear Schr\"{o}dinger equations}

Let us consider the coupled DNLS equations
\begin{eqnarray}
 iq_t+q_{xx}-i(q^2r)_x&=&0, \label{DNLS1}\\
 ir_t-r_{xx}-i(r^2q)_x&=&0, \label{DNLS2}
\end{eqnarray}
where $q=q(x,t)$ and $r=r(x,t)$ are complex-valued functions.
Equations (\ref{DNLS1}) and (\ref{DNLS2}) reduce to the DNLS equation \eqref{DNLS} when $r=q^*$,
where $q^*$ denotes the complex conjugation of $q$.

The Lax pair for the coupled DNLS equations (\ref{DNLS1})-(\ref{DNLS2}) is given by
\begin{eqnarray}
 L&=&\partial_x+J\lambda^2-R\lambda\label{LaxL}\\
 M&=&\partial_t+2J\lambda^4-2R\lambda^3+qrJ\lambda^2+U\lambda,\label{LaxM}
\end{eqnarray}
where $J$, $R$ and $U$ are the $2\times 2$ matrices
\begin{eqnarray}
  J={\left(\begin{array}{cc}
          i & 0\\ 0 & -i
  \end{array}\right)},\hspace{0.3cm}
   R={\left(\begin{array}{cc}
          0& q\\ r & 0
  \end{array}\right)}\hspace{0.3cm}\text{and}\hspace{0.3cm}
   U={\left(\begin{array}{cc}
          0& -iq_x-rq^2\\ ir_x-r^2q & 0
  \end{array}\right)}.
\end{eqnarray}
Here $\lambda$ is an arbitrary complex number, the eigenvalue (or spectral parameter).

\section{Darboux Theorems and Dimensional Reductions}\label{DTs}

\subsection{General Darboux theorems}
\begin{theorem}[\cite{Darboux,Matveev,MS}] Consider the linear operator
\begin{equation}\label{L}
 L=\partial_x+\sum_{i=0}^nu_i\partial_y^i
\end{equation}
where $u_i\in R$, where $R$ is a ring, in general non-commutative. Let $G=\theta\partial_y\theta^{-1}$, where $\theta=\theta(x,y)$ is an invertible eigenfunction of $L$, so that $L(\theta)=0$.
Then
\begin{equation}
 \tilde{L}=GLG^{-1}
\end{equation}
has the same form as $L$:
\begin{equation}
 \tilde{L}=\partial_x+\sum_{i=0}^n\tilde{u}_i\partial_y^i
\end{equation}
If $\phi$ is any eigenfunction of $L$ then
\begin{equation}
 \tilde{\phi}=\phi_x-\theta_y\theta^{-1}\phi
\end{equation}
is an eigenfunction of $\tilde{L}$. In other words, if $L(\phi)=0$ then $\tilde{L}(\tilde{\phi})=0$ where $\tilde{\phi}=G(\phi)$.
\end{theorem}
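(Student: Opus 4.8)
The plan is to read the formal conjugation $\tilde L=GLG^{-1}$ as the \emph{intertwining relation} $\tilde L\,G=GL$, since $G$ is a differential operator rather than an honestly invertible map; producing an operator $\tilde L$ of the advertised shape that intertwines will give both assertions at once. First I would record the action of $G=\theta\partial_y\theta^{-1}$. Using the noncommutative rule $\partial_y(\theta^{-1})=-\theta^{-1}\theta_y\theta^{-1}$, one gets $G=\partial_y-\sigma$ with $\sigma:=\theta_y\theta^{-1}\in R$, so that $G(\phi)=\phi_y-\theta_y\theta^{-1}\phi$ is exactly the Darboux-transformed eigenfunction $\tilde\phi$, and, decisively, $G(\theta)=0$.

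With the intertwining in hand the eigenfunction claim is automatic: if $L(\phi)=0$ then $\tilde L(\tilde\phi)=\tilde L\big(G(\phi)\big)=(\tilde L\,G)(\phi)=(GL)(\phi)=G\big(L(\phi)\big)=0$. So the entire content of the theorem is the existence of $\tilde L=\partial_x+\sum_{i=0}^n\tilde u_i\partial_y^i$ with $\tilde L\,G=GL$; equivalently, that the conjugate carries no negative powers of $\partial_y$ and no spurious $\partial_x$-terms.

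To build $\tilde L$ I would divide on the right by the monic first-order operator $G$ in the $\partial_y$-grading. Writing $L=\partial_x+P$ with $P:=\sum_{i=0}^n u_i\partial_y^i$ purely in $\partial_y$, right-division of $GP$ (which has $\partial_y$-order $n+1$) by $G$ yields unique $\partial_y$-operators $\tilde P$ of order $n$ and a remainder $\rho_0$ of order $0$, i.e. multiplication by an element of $R$, with $GP=\tilde P\,G+\rho_0$. Set $\tilde L:=\partial_x+\tilde P$. The $\partial_x$-parts cooperate because $G$ contains no $\partial_x$: a short commutator computation gives $[G,\partial_x]=\sigma_x$, whence $GL-\tilde L\,G=[G,\partial_x]+(GP-\tilde P\,G)=\sigma_x+\rho_0$, which is a pure multiplication operator (order $0$ in $\partial_y$, free of $\partial_x$). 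Throughout I must respect the ordering of factors, since $R$ is noncommutative; this bookkeeping, rather than any conceptual hurdle, is where the care lies.

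It remains to kill the residue $\sigma_x+\rho_0$, and this is where $L(\theta)=0$ is used. Evaluating the identity $GL-\tilde L\,G=\sigma_x+\rho_0$ on the seed $\theta$ gives $G\big(L(\theta)\big)-\tilde L\big(G(\theta)\big)=0$, using both $L(\theta)=0$ and $G(\theta)=0$; hence $(\sigma_x+\rho_0)\theta=0$. Since $\sigma_x+\rho_0$ is multiplication by a single element of $R$ and $\theta$ is invertible, that element vanishes, so $\tilde L\,G=GL$ with $\tilde L$ of the required form. The main obstacle I anticipate is exactly this structural point — ensuring the conjugate is a genuine differential operator of the same order and type — and the value of the division-with-remainder argument is that it corrals every potentially bad term into the single order-zero expression $\sigma_x+\rho_0$, which the seed condition then annihilates.
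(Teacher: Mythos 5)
Your argument is correct. Note first that the paper itself supplies no proof of this theorem: it is quoted from the classical literature (Darboux, Matveev, Matveev--Salle), so there is no in-paper proof to compare against. Your proposal is essentially the standard argument that those references give, organised cleanly: read the conjugation as the intertwining identity $\tilde L G = GL$; use right division by the monic first-order operator $G=\partial_y-\theta_y\theta^{-1}$ to show that $GL-\tilde L G$ collapses to a single order-zero multiplication operator $\sigma_x+\rho_0$; and then kill that residue by evaluating on the invertible seed $\theta$, using $L(\theta)=0$ together with $G(\theta)=0$. All the individual steps check out: $[G,\partial_x]=\sigma_x$ is correct, the division with remainder is legitimate in the noncommutative setting because the divisor is monic, and $m\theta=0$ with $\theta$ invertible forces $m=0$. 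You also implicitly and correctly repair the typo in the statement, where $\tilde\phi=\phi_x-\theta_y\theta^{-1}\phi$ should read $\tilde\phi=\phi_y-\theta_y\theta^{-1}\phi=G(\phi)$. The one presentational gap is that the existence and uniqueness of the right division $GP=\tilde P G+\rho_0$ over a noncommutative coefficient ring is asserted rather than spelled out, but since $G$ has leading coefficient $1$ the usual inductive reduction of the $\partial_y$-degree goes through verbatim, so this is a matter of detail, not substance.
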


This Darboux transformation does not, however, preserve the form of $L$ when $u_0=0$. That is, for $L$ with $u_0=0$, $\tilde{u}_0\neq 0$ in general. In the scalar case \cite{O.R} and matrix case \cite{JonKP}, it is shown that the operator \eqref{L} with $u_0=0$ is invariant under the Darboux transformation
\begin{eqnarray}\label{DTJon}
 G=\left[\left(\theta^{-1}\right)_y\right]^{-1}\partial_y\theta^{-1}.
\end{eqnarray}
Unfortunately, this transformation acts trivially for the DNLS and we need to consider a slight generalization.
\begin{theorem}
Consider the linear operator
\begin{equation}\label{L1}
 L=\partial_x+\sum_{i=1}^nu_i\partial_y^i
\end{equation}
where $u_i\in R$. Let
\begin{eqnarray}\label{GDNLS}
  G=\sigma\left[\left(\theta^{-1}\right)_y\right]^{-1}\partial_y\theta^{-1},
\end{eqnarray}
where $\theta=\theta(x,y)$ is an invertible eigenfunction of $L$ and $\sigma\in R$ is invertible and independent of $x$ and $y$.
Then
\begin{equation}
 \tilde{L}=GLG^{-1}
\end{equation}
has the same form as $L$:
\begin{equation}
 \tilde{L}=\partial_x+\sum_{i=1}^n\tilde{u}_i\partial_y^i
\end{equation}
If $\phi$ is any eigenfunction of $L$ then
\begin{equation}
 \tilde{\phi}=\sigma\phi-\sigma\theta\left(\theta_y\right)^{-1}\phi_y
\end{equation}
is an eigenfunction of $\tilde{L}$. In other words if  $L(\phi)=0$ then $\tilde{L}(\tilde{\phi})=0$ where $\tilde{\phi}=G(\phi)$.
\end{theorem}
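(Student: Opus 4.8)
The plan is to prove the intertwining identity $GL=\tilde L G$ for a suitable operator $\tilde L$ of the stated form, rather than computing $GLG^{-1}$ directly; this keeps every object a genuine differential operator and sidesteps the (pseudo-differential) inverse $G^{-1}$. As a first step I would rewrite $G$ in standard form. Using $\theta^{-1}\theta=I$ one gets $(\theta^{-1})_y=-\theta^{-1}\theta_y\theta^{-1}$, hence $[(\theta^{-1})_y]^{-1}\theta^{-1}=-\theta(\theta_y)^{-1}$, and expanding the operator product $\partial_y\theta^{-1}=\theta^{-1}\partial_y+(\theta^{-1})_y$ gives
\begin{equation}
 G=\sigma\left[(\theta^{-1})_y\right]^{-1}\theta^{-1}\partial_y+\sigma=-\sigma\theta(\theta_y)^{-1}\partial_y+\sigma .
\end{equation}
Applying this to $\phi$ reproduces $\tilde\phi=\sigma\phi-\sigma\theta(\theta_y)^{-1}\phi_y$, and setting $\phi=\theta$ shows $G(\theta)=0$, i.e. $\theta\in\ker G$.

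Second, since $\sigma$ is invertible and independent of $x$ and $y$, it commutes with $\partial_x$ and $\partial_y$, so writing $G=\sigma G_0$ with $G_0=[(\theta^{-1})_y]^{-1}\partial_y\theta^{-1}$ as in \eqref{DTJon} gives $\tilde L=GLG^{-1}=\sigma(G_0LG_0^{-1})\sigma^{-1}$. Thus it suffices to treat the case $\sigma=I$ and then conjugate the result by the constant $\sigma$: if $G_0LG_0^{-1}=\partial_x+\sum_{i=1}^n u_i'\partial_y^i$ has the stated form, then $\tilde L=\partial_x+\sum_{i=1}^n(\sigma u_i'\sigma^{-1})\partial_y^i$ does too, with $\tilde u_i=\sigma u_i'\sigma^{-1}$. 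The invariance of an operator \eqref{L1} with vanishing zeroth-order term under $G_0$ is exactly the cited result, so in principle the claim follows at once; to make the argument self-contained over a general ring $R$ I would instead verify the intertwining directly.

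Third, for the direct verification I would equate $GL$ and $\tilde L G$ as polynomials in $\partial_y$ with coefficients in $R$ (treating $x$ as a parameter), using $\partial_x\circ G=G\circ\partial_x+G_x$, where $G_x$ denotes differentiation of the coefficients of $G$. This reduces the whole statement to the purely $\partial_y$ identity $G_x+\sum_i\tilde u_i\partial_y^iG-G\sum_iu_i\partial_y^i=0$. Comparing the coefficient of $\partial_y^{\,n+1}$ forces $\tilde u_n=a u_n a^{-1}$ with $a=-\sigma\theta(\theta_y)^{-1}$, and descending order by order determines $\tilde u_{n-1},\dots,\tilde u_1$ uniquely. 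The main obstacle is the bottom of the recursion: since $\tilde L$ is required to have no zeroth-order coefficient, the coefficient of $\partial_y^0$ must vanish automatically, and this closure is precisely where the eigenfunction condition $L(\theta)=0$ (equivalently $G(\theta)=0$) must be used. I expect this to be the only genuinely nontrivial computation.

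Finally, the eigenfunction statement is immediate from the intertwining: for any $\phi$ with $L(\phi)=0$ we have $\tilde L(\tilde\phi)=\tilde L(G\phi)=(\tilde L G)\phi=(GL)\phi=G(L\phi)=0$, so $\tilde\phi=G(\phi)=\sigma\phi-\sigma\theta(\theta_y)^{-1}\phi_y$ is an eigenfunction of $\tilde L$, as claimed.
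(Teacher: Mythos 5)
Your proposal is correct, and its essential step --- writing $G=\sigma G_0$ with $G_0=[(\theta^{-1})_y]^{-1}\partial_y\theta^{-1}$, invoking the known $\sigma=I$ case, and noting that conjugation by the constant invertible $\sigma$ sends $\tilde u_i\mapsto\sigma\tilde u_i\sigma^{-1}$ and so preserves the form of $L$ --- is exactly the paper's proof, which likewise cites the $\sigma=I$ result of Nimmo's KP-reduction paper rather than reproving it. Your additional sketch of a self-contained intertwining verification of the base case goes beyond what the paper does, but it is not needed and the argument already stands at the same level of rigor as the published proof.
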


\begin{proof}
The case $\sigma=I$ is proved in \cite{JonKP}. For the case of general $\sigma$, it is sufficient to observe that under the transformation $G\to\sigma G$, $\tilde L\to \sigma\tilde L\sigma^{-1}$ and, since $\sigma$ is constant, $\tilde u_i\to\sigma\tilde u_i\sigma^{-1}$. Thus the structure of $L$ is preserved by $G$ given in \eqref{GDNLS}.
\end{proof}

\subsection{Dimensional reduction of Darboux transformation}
Here, we describe a reduction of the Darboux transformation from $(2+1)$ to $(1+1)$ dimensions. We choose to  eliminate the $y$-dependence by employing a `separation of variables' technique. The reader is referred to the paper \cite{NGO} for a more detailed treatment. We make the ansatz
\begin{eqnarray}
 \phi &=&\phi^r(x,t)e^{\lambda y},\\
 \theta &=&\theta^r(x,t)e^{\Lambda y},
\end{eqnarray}
where $\lambda$ is a constant scalar and $\Lambda$ an $N \times N$ constant matrix and
the superscript $r$ labels reduced functions, independent of $y$.
Hence in the dimensional reduction we obtain $\partial_y^{i}\left(\phi\right)=\lambda^i\phi$ and
$\partial_y^{i}\left(\theta \right)=\theta \Lambda^i$ and so the operator $L$ and Darboux transformation $G$ become
 \begin{eqnarray}\label{redL}
  L^r&=&\partial_x+\sum_{i=1}^n u_i\lambda^i,\\
  G^r&=&\sigma-\sigma\theta^r\Lambda^{-1}(\theta^{r})^{-1}\lambda,
 \end{eqnarray}
where $\theta^r$ is a matrix eigenfunction of $L^r$ such that $L^r\left(\theta^r\right)=0$, with $\lambda$ replaced by matrix $\Lambda$, that is,
\begin{equation}
	\theta^r_x+\sum_{i=1}^n u_i\theta^r\Lambda^i=0.
\end{equation}
Below we omit the superscript $r$ for a simpler notation.

\subsection{Iteration of reduced Darboux Transformations}\label{DT1}

In this section we shall consider iteration of the Darboux transformation and find closed form expressions for these in terms of quasideterminants. The reader is referred to \cite{Gelfand91,Gelfand05} for an explanation of the quasideterminant notation.

Let $L$ be an operator, form invariant under the reduced Darboux transformation $G_{\theta}=\sigma-\sigma\theta\Lambda^{-1}\theta^{-1}\lambda$ discussed above.

Let $\phi=\phi(x,t)$ be a general eigenfunction of $L$ such that $L(\phi)=0$. Then
\begin{eqnarray*}
\tilde{\phi}&=&G_\theta\left(\phi\right)\\
             &=&\sigma \left(\phi-\theta\Lambda^{-1}\theta^{-1}\lambda\phi\right)\\
             &=&\sigma\left|\begin{array}{cc} \theta & \boxed{\phi}\\ \theta\Lambda & \lambda\phi\end{array}\right|
\end{eqnarray*}
is an eigenfunction of $\tilde{L}=G_\theta L G_\theta^{-1}$ so that
$\tilde{L}(\tilde{\phi})=\lambda\tilde{\phi}$. Let $\theta_i$ for $i=1,\ldots,n,$ be a particular set
of invertible eigenfunctions of $L$ so that $L(\theta_i)=0$ for $\lambda=\Lambda_i$, and introduce the notation $\Theta=(\theta_1,\ldots,\theta_n)$. To apply the Darboux transformation a second time, let $\theta_{[1]}=\theta_1$ and $\phi_{[1]}=\phi$ be a general eigenfunction of $L_{[1]}=L$.
Then
$\phi_{[2]}=G_{\theta_{[1]}}\left(\phi_{[1]}\right)$ and $\theta_{[2]}=\phi_{[2]}|_{\phi\rightarrow \theta_2}$ are eigenfunctions for $L_{[2]}=G_{\theta_{[1]}}L_{[1]}G_{\theta_{[1]}}^{-1}$.

In general, for $n\geq 1$, we define the $n$th Darboux transform of $\phi$ by
\begin{equation}
 \phi_{[n+1]}=\sigma\left(\phi_{[n]}-\theta_{[n]}\Lambda_n^{-1}\theta_{[n]}^{-1}\lambda \phi_{[n]}\right),
\end{equation}
in which
\begin{equation*}
 \theta_{[k]}=\phi_{[k]}|_{\phi\rightarrow\theta_k}~.
\end{equation*}
For example,
\begin{eqnarray*}
\phi_{[2]}&=&\sigma \left(\phi-\theta_1\Lambda_1^{-1}\theta_1^{-1}\lambda\phi\right)
         =\sigma\left|\begin{array}{cc} \theta_1 & \boxed{\phi}\\ \theta_1\Lambda_1&\lambda\phi\end{array}\right|,\\
\phi_{[3]}&=&\sigma \left(\phi_{[2]}-\theta_{[2]}\Lambda_2^{-1}\theta_{[2]}^{-1}\lambda\phi_{[2]}\right)\\
          &=&\sigma^2\left|\begin{array}{ccc} \theta_1 & \theta_2 & \boxed{\phi}\\
            \theta_1\Lambda_1 & \theta_2 \Lambda_2 & \lambda \phi\\
            \theta_1\Lambda_1^2&\theta_2\Lambda_2^2 & \lambda^2 \phi\end{array}\right|.
\end{eqnarray*}
After $n$ iterations, we get
\begin{eqnarray}
\phi_{[n+1]}= \sigma^n \left|\begin{array}{ccccc}
      \theta_1 & \theta_2 \hspace{0.2cm} \ldots \hspace{0.2cm} \theta_n & \boxed{\phi}\\
      \theta_1\Lambda_1 & \theta_2\Lambda_2 \ldots \theta_n\Lambda_n & \lambda\phi\\
      \theta_1\Lambda_1^2 & \theta_2\Lambda_2^2 \ldots \theta_n\Lambda_n^2 & \lambda^2 \phi\\
      \vdots & \vdots \hspace{0.3cm} \ldots \hspace{0.3cm} \vdots & \vdots \\
      \theta_1\Lambda_1^n & \theta_2\Lambda_2^n \ldots \theta_n\Lambda_n^n & \lambda^n \phi\\
     \end{array}
 \right|.
\end{eqnarray}

\section{Constructing Solutions for DNLS Equation}\label{DT2}
In this section we determine the specific effect of the Darboux transformation $G=\sigma-\sigma\theta\Lambda^{-1}\theta^{-1}\lambda$
on the $2\times2$ Lax operators $L,M$ given by \eqref{LaxL},\eqref{LaxM}. Here $\theta$ is a eigenfunction satisfying $L(\theta)=M(\theta)=0$ with $2\times2$ matrix eigenvalue $\Lambda$. From $\tilde LG=GL$ we obtain the three conditions
\begin{align}
	\label{C1}
	[J,\sigma\theta\Lambda^{-1}\theta^{-1}]&=0\\
	\label{C2}
	\tilde R\sigma\theta\Lambda^{-1}\theta^{-1}&=\sigma\theta\Lambda^{-1}\theta^{-1}R+[\sigma,J]\\
	\label{C3}
	\tilde R\sigma&=\sigma R-(\sigma\theta\Lambda^{-1}\theta^{-1})_x.
\end{align}
From \eqref{C1}, we see that $\sigma\theta\Lambda^{-1}\theta^{-1}$ must be a diagonal matrix and then from \eqref{C2} that $[\sigma,J]$ and hence $\sigma$ must be off-diagonal. Guided by this, we choose
\begin{eqnarray}
 \Lambda=\left(\begin{array}{cc} 1 & 0\\ 0 & -1\end{array}\right)\lambda ,\hspace{1cm} \sigma=\left(\begin{array}{cc} 0 & 1\\ 1 & 0\end{array}\right).
\end{eqnarray}
Finally, comparison of \eqref{C2} and \eqref{C3} leads to the requirement that the matrix $\theta$ has the structure
\begin{eqnarray}
 \theta=\left(\begin{array}{cc} \theta_{11} & \theta_{12}\\ f\theta_{11} & -f\theta_{12}\end{array}\right),
\end{eqnarray}
and in turn the linear equations for $\theta$ impose conditions of $f$, namely the Riccati equations
\begin{eqnarray}
 f_x+\lambda qf^2-2\lambda^2if-\lambda r=0 \label{riccati1}\\
 f_t+\lambda\left(iq_x+rq^2+2\lambda^2q\right)f^2-2i\lambda^2\left(2\lambda^2+qr\right)f-\lambda\left(qr^2+2\lambda^2r-ir_x\right)=0 \label{riccati2}
\end{eqnarray}
for given $q(x,t), r(x,t)$ solutions in (\ref{DNLS1}-\ref{DNLS2}) and $\lambda$ is a constant scalar.

In summary, the Darboux transformation is
\begin{eqnarray}\label{tilR}
 \sigma \tilde{R} \sigma =R-\left(\theta \Lambda^{-1} \theta^{-1}\right)_x
\end{eqnarray}
which can be written in a quasideterminant structure as
\begin{eqnarray}
 \sigma \tilde{R} \sigma =R+\left|\begin{array}{cc} \theta & \boxed{0_2}\\ \theta\Lambda & I_2\end{array}\right|_x,
\end{eqnarray}
\newline
We rewrite \eqref{tilR} as
\begin{eqnarray}
 \sigma R_{[2]} \sigma =R_{[1]}-\left(\theta_{[1]} \Lambda_1^{-1} \theta_{[1]}^{-1}\right)_x
\end{eqnarray}
where $R_{[1]}=R$, $R_{[2]}=\tilde{R}$, $\theta_{[1]}=\theta_1=\theta$, $f_1=f$,  $\Lambda_1=\Lambda$, $\lambda=\lambda_1$. Then after repeated $n$ Darboux transformations, we have
\begin{eqnarray}
 \sigma R_{[n+1]} \sigma =R_{[n]}-\left(\theta_{[n]} \Lambda_n^{-1} \theta_{[n]}^{-1}\right)_x
\end{eqnarray}
which can be written in quasideterminant form as
\begin{eqnarray}\label{Rn1}
\sigma^nR_{[n+1]}\sigma^n&=&R+\left|\begin{array}{ccccc}
      \theta_1 & \theta_2 \hspace{0.2cm} \ldots \hspace{0.2cm} \theta_n & \boxed{0_2}\\
      \theta_1\Lambda_1 & \theta_2\Lambda_2 \ldots \theta_n\Lambda_n & I_2\\
      \theta_1\Lambda_1^2 & \theta_2\Lambda_2^2 \ldots \theta_n\Lambda_n^2 & 0_2\\
      \vdots & \vdots \hspace{0.3cm} \ldots \hspace{0.3cm} \vdots & \vdots \\
      \theta_1\Lambda_1^n & \theta_2\Lambda_2^n \ldots \theta_n\Lambda_n^n & 0_2\\
     \end{array}
 \right|_x,
\end{eqnarray}
where
\begin{eqnarray}
 \theta_i\Lambda_i^k=\left(\begin{array}{cc} \phi_{2i-1} & (-1)^k\phi_{2i}\\ f_i\phi_{2i-1} & (-1)^{k+1}\phi_{2i}\end{array}\right)\lambda_i^k,
\end{eqnarray}
where $i=1, \ldots ,n$, $k=0, \ldots, n$ and $f_i$ is a solution of the Riccati equations \eqref{riccati1}-\eqref{riccati2}.

Let
\begin{eqnarray}
 \Theta=\left(\theta_1,\ldots,\theta_n\right)
       =\left(\begin{array}{c}\phi\\\psi \end{array}\right),
\end{eqnarray}
where $\phi=\left(\phi_1, \phi_2 , \ldots, \phi_{2n-1}, \phi_{2n}\right)$ and
$\psi=\left(f_1\phi_1,-f_1\phi_2, \ldots, f_n\phi_{2n-1}, -f_n\phi_{2n}\right)$ denote $1 \times 2n$ row vectors.
Thus, \eqref{Rn1} can be rewritten as
\begin{eqnarray}
  \sigma^n R_{[n+1]}\sigma^n=R+\left|\begin{array}{cc} \Theta & \boxed{0_2}\\ \widehat{\Theta} & E\end{array}\right|_x,
\end{eqnarray}
where $\widehat{\Theta}=\left(\theta_i\Lambda_i^j\right)_{i,j=1,\ldots,n}$ and $E=\left(e_1, e_2\right)$ denote
$2n \times 2n$ and $2n \times 2$ matrices respectively, where $e_i$ represents a column vector with $1$ in the $i^{th}$ row and zeros elsewhere.
Hence, we obtain


\begin{eqnarray}
 \sigma^n R_{[n+1]}\sigma^n=R+\left(\begin{array}{cc}
  \left|\begin{array}{cc} \phi & \boxed{0}\\ \widehat{\Theta} & e_1\end{array}\right|_x &
  \left|\begin{array}{cc} \phi & \boxed{0}\\ \widehat{\Theta} & e_2\end{array}\right|_x\\\\
  \left|\begin{array}{cc} \psi & \boxed{0}\\ \widehat{\Theta} & e_1\end{array}\right|_x &
  \left|\begin{array}{cc} \psi & \boxed{0}\\ \widehat{\Theta} & e_2\end{array}\right|_x \end{array}\right),
\end{eqnarray}
where it can be easily shown that
\begin{eqnarray}
  \begin{array}{cc}
  \left|\begin{array}{cc} \phi & \boxed{0}\\ \widehat{\Theta} & e_1\end{array}\right|=
  \left|\begin{array}{cc} \psi & \boxed{0}\\ \widehat{\Theta} & e_2\end{array}\right|=0\end{array}.
\end{eqnarray}
The pair of $q_{[n+1]}$ and $r_{[n+1]}$ are derived from the above matrix expression with respect to $n$ which is odd $(n=2k-1)$ or even number $(n=2k)$, where $k\in \mathbb{N}$ is a positive integer.

\subsection*{In the case of n odd ($n=2k-1$)}
\begin{eqnarray}
 \begin{array}{cc}
  q_{[n+1]}=r+\left|\begin{array}{cc} \psi & \boxed{0}\\ \widehat{\Theta} & e_1\end{array}\right|_x\end{array},
\end{eqnarray}
\begin{eqnarray}
  \begin{array}{cc}
  r_{[n+1]}=q+\left|\begin{array}{cc} \phi & \boxed{0}\\ \widehat{\Theta} & e_2\end{array}\right|_x\end{array}.
\end{eqnarray}
Thus, we obtain
\begin{eqnarray}\label{qodd}
 \begin{array}{cc}
  q_{[n+1]}=r+\left|\begin{array}{cc} \mathfrak{f} & \boxed{0}\\ \Omega_q & \mathfrak{e_1} \end{array}\right|_x\end{array},
\end{eqnarray}
\begin{eqnarray}\label{rodd}
  \begin{array}{cc}
  r_{[n+1]}=q+\left|\begin{array}{cc} \mathfrak{1} & \boxed{0}\\ \Omega_r & \mathfrak{e_1} \end{array}\right|_x\end{array},
\end{eqnarray}
where
 $\mathfrak{e_1}=(1,0,\ldots,0)^T$, $\mathfrak{1}=(1,1,\ldots,1)$, $\mathfrak{f}=(f_1,f_2,\ldots,f_n)$ and
 \begin{eqnarray}\label{oddmatrix}
 \Omega_q=\left(\begin{array}{ccccc}
      \lambda_1 & \lambda_2 & \ldots & \lambda_n\\
      f_1\lambda_1^2 & f_2\lambda_2^2 & \ldots & f_n\lambda_n^2\\
      \lambda_1^3 & \lambda_2^3 & \ldots & \lambda_n^3\\
      \vdots & \vdots & \vdots & \vdots \\
      f_1\lambda_1^{n-1} & f_2\lambda_2^{n-1} & \ldots & f_n\lambda_n^{n-1}\\
      \lambda_1^n & \lambda_2^n & \ldots & \lambda_n^n\\
     \end{array}
 \right),~
 \Omega_r=\left(\begin{array}{ccccc}
      f_1\lambda_1 & f_2\lambda_2 & \ldots & f_n\lambda_n\\
      \lambda_1^2 & \lambda_2^2 & \ldots & \lambda_n^2\\
      f_1\lambda_1^3 & f_2\lambda_2^3 & \ldots & f_n\lambda_n^3\\
      \vdots & \vdots & \vdots & \vdots \\
      \lambda_1^{n-1} & \lambda_2^{n-1} & \ldots & \lambda_n^{n-1}\\
      f_1\lambda_1^n & f_2\lambda_2^n & \ldots & f_n\lambda_n^n\\
     \end{array}
 \right).
\end{eqnarray}
For $n=1$, we obtain the pair of new solutions for the couple DNLS equations \eqref{DNLS1}-\eqref{DNLS2}
\begin{eqnarray}
 \begin{array}{cc}
  q_{[2]}=r+\left|\begin{array}{cc} f_1 & \boxed{0}\\ \lambda_1 & 1 \end{array}\right|_x\end{array}\nonumber\\
         =r-\frac{1}{\lambda_1}f_{1,x},\hspace{0.95cm}\label{qr1}
 \end{eqnarray}
\begin{eqnarray}
  \begin{array}{cc}
  r_{[2]}=q+\left|\begin{array}{cc} 1 & \boxed{0}\\ f_1\lambda_1 & 1 \end{array}\right|_x\end{array}\nonumber\\
         =q-\frac{1}{\lambda_1}\left(\frac{1}{f_1}\right)_x,\hspace{0.55cm}
\end{eqnarray}
where $f_1$ is a solution of the Riccati equations \eqref{riccati1}-\eqref{riccati2}.
\subsection*{In the case of n even ($n=2k$)}
\begin{eqnarray}
 \begin{array}{cc}
  q_{[n+1]}=q+\left|\begin{array}{cc} \phi & \boxed{0}\\ \widehat{\Theta} & e_2 \end{array}\right|_x\end{array},
\end{eqnarray}
\begin{eqnarray}
  \begin{array}{cc}
  r_{[n+1]}=r+\left|\begin{array}{cc} \psi & \boxed{0}\\ \widehat{\Theta}& e_1\end{array}\right|_x\end{array}.
\end{eqnarray}
Thus, we obtain
\begin{eqnarray}\label{qeven}
 \begin{array}{cc}
  q_{[n+1]}=q+\left|\begin{array}{cc} \mathfrak{1} & \boxed{0}\\ \mho_q & \mathfrak{e_1}\end{array}\right|_x\end{array},
\end{eqnarray}
\begin{eqnarray}\label{reven}
  \begin{array}{cc}
  r_{[n+1]}=r+\left|\begin{array}{cc} \mathfrak{f} & \boxed{0}\\ \mho_r& \mathfrak{e_1} \end{array}\right|_x\end{array},
\end{eqnarray}
where $\mathfrak{e_1}=(1,0,\ldots,0)^T$, $\mathfrak{1}=(1,1,\ldots,1)$, $\mathfrak{f}=(f_1,f_2,\ldots,f_n)$ and
 \begin{eqnarray}\label{evenmatrix}
 \mho_q=\left(\begin{array}{ccccc}
      f_1\lambda_1 & f_2\lambda_2 & \ldots & f_n\lambda_n\\
      \lambda_1^2 & \lambda_2^2 & \ldots & \lambda_n^2\\
      f_1\lambda_1^3 & f_2\lambda_2^3 & \ldots & f_n\lambda_n^3\\
      \vdots & \vdots & \vdots & \vdots \\
      f_1\lambda_1^{n-1} & f_2\lambda_2^{n-1} & \ldots & f_n\lambda_n^{n-1}\\
      \lambda_1^n & \lambda_2^n & \ldots & \lambda_n^n\\
     \end{array}
 \right),~
 \mho_r=\left(\begin{array}{ccccc}
      \lambda_1 & \lambda_2 & \ldots & \lambda_n\\
      f_1\lambda_1^2 & f_2\lambda_2^2 & \ldots & f_n\lambda_n^2\\
      \lambda_1^3 & \lambda_2^3 & \ldots & \lambda_n^3\\
      \vdots & \vdots & \vdots & \vdots \\
      \lambda_1^{n-1} & \lambda_2^{n-1} & \ldots & \lambda_n^{n-1}\\
      f_1\lambda_1^n & f_2\lambda_2^n & \ldots & f_n\lambda_n^n\\
     \end{array}
 \right).
\end{eqnarray}

For $n=2$, we have

\begin{eqnarray}
 \begin{array}{ccc}
  q_{[3]}=q+\left|\begin{array}{ccc} 1 & 1 & \boxed{0}\\ f_1\lambda_1 &f_2 \lambda_2 & 1 \\ \lambda_1^2 & \lambda_2^2 & 0
  \end{array}\right|_x\end{array},
\end{eqnarray}
\begin{eqnarray}
 \begin{array}{ccc}
  r_{[3]}=r+\left|\begin{array}{ccc} f_1 & f_2 & \boxed{0}\\ \lambda_1 & \lambda_2 & 1 \\ f_1\lambda_1^2 & f_2\lambda_2^2 & 0
  \end{array}\right|_x\end{array}.
\end{eqnarray}
Thus, we obtain the pair of new solutions for the couple DNLS equations \eqref{DNLS1}-\eqref{DNLS2}
\begin{eqnarray}
 q_{[3]}&=&q+\frac{\lambda_1^2-\lambda_2^2}{\lambda_1\lambda_2} \left(\frac{1}{\lambda_2f_1-\lambda_1f_2}\right)_x,\label{q3}\\
 r_{[3]}&=&r-\frac{\lambda_1^2-\lambda_2^2}{\lambda_1\lambda_2} \left(\frac{f_1f_2}{\lambda_1f_1-\lambda_2f_2}\right)_x,
\end{eqnarray}
 where $f_1$ and $f_2$ are two distinct solutions of the Riccati equations \eqref{riccati1}-\eqref{riccati2}.

\subsection*{Reduction}
The eigenvalues $\lambda_k$ have to be real or pairs of complex conjugate values when we choose the reduction $r_{[k]}=q_{[k]}^*$. This reduction condition gives the following relations:

\begin{eqnarray}
 f_kf_k^*&=&1 \hspace{0.2cm}\text{for real} \hspace{0.2cm} \lambda_k,\\
f_m&=&\frac{1}{f_k^*}\hspace{0.2cm}\text{when}\hspace{0.2cm}\lambda_m =\lambda_k^*\hspace{0.2cm}(m\neq k),
\end{eqnarray}
where $f_i$ is a solution of the Riccati equations \eqref{riccati1}-\eqref{riccati2} $(i,k,m\in \mathbb{N})$.

\section{Particular solutions}\label{PS}

\subsection{Solutions for the vacuum}

For $q=r=0$, the Riccati equations \eqref{riccati1}-\eqref{riccati2} transforms into the first-order linear system
\begin{eqnarray}\label{1stLDE}
 f_x-2\lambda^2if&=&0\\
 f_t-4i\lambda^4f&=&0
\end{eqnarray}
which has a solution
\begin{eqnarray}
 f= c e^{2\lambda^2\left(x+2\lambda^2 t\right)i},
\end{eqnarray}
where $c$ is an arbitrary integration constant.

\subsection*{Case 1 ($n=1$)}
For one single Darboux transformation-due to the required reduction $r=q^*$, we have to take $\lambda_1=\lambda$ real and $|c|=1$.
By choosing arbitrary constant $c=1$, we have
\begin{eqnarray}
 f_1=e^{2\lambda_1^2\left(x+2\lambda_1^2 t\right)i}.
\end{eqnarray}
By substituting $f_1$ into \eqref{qr1}, we obtain a new solution $q_{[2]}$ for DNLS equation \eqref{DNLS} as
\begin{eqnarray}
 q_{[2]}=-2i\lambda_1  e^{2\lambda_1^2\left(x+2\lambda_1^2 t\right)i},
\end{eqnarray}
where $r_{[2]}=\left(q_{[2]}\right)^*$.\\
This, of course, is not a soliton but a periodic solution. It is obvious that $\left|q_{[2]}\right|^2=constant$ so it satisfies a linear equation
$iq_t+cq_x+q_{xx}=0$ obtained from \eqref{DNLS}, where $c$ is a constant. Thus, it is not an interesting solution we would like to have by the Darboux transformation.

\subsection*{Case 2 ($n=2$)}
Substituting $f_1=e^{2\lambda_1^2\left(x+2\lambda_1^2 t\right)i}$ and  $f_2=e^{2\lambda_2^2\left(x+2\lambda_2^2 t\right)i}$ into \eqref{q3} and then letting $\lambda_1=\xi+\eta$ and $\lambda_2=\xi-\eta$,
we obtain a new solution $q_{[3]}$ for DNLS equation \eqref{DNLS} as
\begin{eqnarray} q_{[3]}&=&-2i\left(\lambda_1^2-\lambda_2^2\right)\frac{\lambda_1f_1-\lambda_2f_2}{\left(\lambda_2f_1-\lambda_1f_2\right)^2}\\
        &=&-4i\xi\eta~\frac{\left(\eta\cos F+\xi i\sin F\right)^3}{\left(\xi^2+\left(\eta^2-\xi^2\right)\cos^2 F \right)^2}~e^{-iG}\label{sol.ab},
\end{eqnarray}
where $F=4\xi\eta\left(x+4\left(\xi^2+\eta^2\right)t\right)$ and $G=2\left(\xi^2+\eta^2\right)x+4\left(\xi^4+6\xi^2\eta^2+\eta^4\right)t$.

In order that $r_{[3]}=q_{[3]}^*$, $\lambda_1$ and $\lambda_2$ are either real $(\xi,\eta\in \mathbb{R})$ or complex conjugate eigenvalues $(\xi\in \mathbb{R}, \eta\in i\mathbb{R})$.
This solution holds in both cases. For real eigenvalues $\lambda_1,\lambda_2 \in \mathbb{R}$, both $F,G$ are real and for the complex case, $G$ is real while $F=iH$ is purely imaginary. For the complex case, taking $\lambda_{1,2}=\kappa \pm i \tau$, where $\xi=\kappa$ and $\eta=i\tau$, \eqref{sol.ab} gives soliton solution of the DNLS generated by the two-fold Darboux transformation
\begin{eqnarray}
 q_{[3]}=-4i\kappa\tau \frac{\left(\tau \cosh H+i\kappa\sinh H\right)^3}{\left(\kappa^2-\left(\kappa^2+\tau^2\right)\cosh^2 H\right)^2}e^{-iG},
\end{eqnarray}
where $G=2\left(\kappa^2-\tau^2\right)x+4\left(\kappa^4-6\kappa^2\tau^2+\tau^4\right)t$ and $H=4\kappa\tau\left(x+4\left(\kappa^2-\tau^2\right)t\right)$ are real functions.

\subsection{Solutions for non-zero seeds}
For $q,r\neq 0$ and $r=q^*$, we can easily find a periodic solution
\begin{eqnarray}\label{perq}
 q=ke^{ia\left[x-(a-k^2)t\right]}
\end{eqnarray}
of the DNLS equation \eqref{DNLS}, where $a$ and $k$ are real numbers. We use this as the seed solution for application of Darboux transformations.

We already know that the solution of \eqref{DNLS} is given in terms of solutions $f_i$ of Riccati system \eqref{riccati1}-\eqref{riccati2}. For given $q$ in \eqref{perq}, we will solve the Riccati system for the function $f(x,t)$. If we define
\begin{eqnarray}
 f=\mu(x,t)e^{-i\alpha},
\end{eqnarray}
where $\alpha=a\left[x-(a-k^2)t\right]$, as a solution of the Riccati system \eqref{riccati1}-\eqref{riccati2} for given $q$ in \eqref{perq},
we end up with the single Riccati equation with constant coefficients
\begin{eqnarray}\label{riccatiC}
 \mu_x+\lambda k \mu^2 -i\left(a+2\lambda^2\right)\mu-\lambda k=0,
\end{eqnarray}
where $\mu_t=\left(k^2-a+2\lambda^2\right)\mu_x$. By letting \begin{eqnarray}\mu=\frac{u_x}{\lambda ku},\end{eqnarray}
the Riccati equation \eqref{riccatiC} transforms into the second-order linear partial differential equation with constant coefficients
\begin{eqnarray}\label{LPDE}
 u_{xx}-i\left(a+2\lambda^2\right)u_x-\lambda^2 k^2 u=0,
\end{eqnarray}
where $u_t=\left(k^2-a+2\lambda^2\right)u_x$. Solving this equation and then substituting $u$ into \eqref{riccatiC},
we obtain the general solution of the Riccati equation as
\begin{eqnarray}\label{mu}
\mu(x,t)=\frac{1}{2\lambda k}\left(Ai+D\frac{c_1e^{\frac{1}{2} D(x+Bt)}-c_2e^{-\frac{1}{2} D(x+Bt)}}{c_1e^{\frac{1}{2} D(x+Bt)}+c_2e^{-\frac{1}{2} D(x+Bt)}}\right),
\end{eqnarray}
where
\begin{eqnarray}
 A=a+2\lambda^2,~~B=k^2-a+2\lambda^2,~~D=\sqrt{4\lambda^2k^2-A^2},
\end{eqnarray}
and $c_1,~c_2$ integration constants, obtained from \eqref{LPDE}.

\subsection*{Case 3 ($n=1$)}
For single real eigenvalue $\lambda_1$, substituting $r=ke^{-i\alpha}$ and $f_1=\mu_1(x,t)e^{-i\alpha}$ into \eqref{qr1} gives the solution
\begin{eqnarray}
  q_{[2]}=\mu_1 \left(k\mu_1-2i\lambda_1\right)e^{-i\alpha},
\end{eqnarray}
where $\alpha(x,t)=a\left[x-a\left(a-k^2\right)t\right]$ and
$\mu_1=\mu$ is given in \eqref{mu} with relabeled coefficients $A_1=A,~ B_1=B,~ D_1=D$ such that $|\mu_1|=1$ and $A_1=a+2\lambda_1^2$, $B_1=k^2-a+2\lambda_1^2$, $D_1=\sqrt{4k^2\lambda_1^2-A_1^2}~$.
It is such that
\begin{eqnarray}\label{q2ps}
 \left|q_{[2]}\right|^2=\left|k\mu_1-2i\lambda_1\right|^2.
\end{eqnarray}
We show below that $D_1^2<0$ and $D_1^2>0$ produce the periodic and soliton solutions respectively.
\subsubsection*{Periodic solution} For $D_1^2=4k^2\lambda_1^2-\left(a+2\lambda_1^2\right)^2<0$, \eqref{mu} gives us
\begin{eqnarray}
 \mu_1(x,t)=i\left(\frac{ke^{\frac{1}{2}i\beta_1}+2\lambda_1e^{-\frac{1}{2}i\beta_1}}{2\lambda_1e^{\frac{1}{2}i\beta_1}+ke^{-\frac{1}{2}i\beta_1}}\right)
\end{eqnarray}
in which have chosen $k^2=2a$, where $\beta_1(x,t)=\left(a-2\lambda_1^2\right)\left[x+\left(a+2\lambda_1^2\right)t\right].$ It can be easily seen that the relation $\mu_1\mu_1^*=1$ holds. By substituting $\mu_1$ into \eqref{q2ps}, we obtain
\begin{eqnarray}
  \left|q_{[2]}\right|^2=\frac{2\left(a-2\lambda_1^2\right)}{a+2\lambda_1^2+2k\lambda_1\cos \beta_1},
\end{eqnarray}
which is a periodic solution.

\subsubsection*{Soliton solution} For $D_1^2=4k^2\lambda_1^2-\left(a+2\lambda_1^2\right)^2>0$, \eqref{q2ps} gives soliton solution as
\begin{eqnarray}
 \left|q_{[2]}\right|^2=k^2-2a-\frac{D_1^2}{(a+2\lambda_1^2)/2+\delta \lambda_1 k\cosh{\gamma_1}},
\end{eqnarray}
where $\gamma_1(x,t)=D_1\left[x+\left(k^2-2a+2\lambda_1^2\right)t\right]$ and $\delta=\pm1$.

\subsection*{Case 4 ($n=2$)} In this case, we have two eigenvalues $\lambda_1$ and $\lambda_2$. For solutions such that $r=q^*$, these eigenvalues are either real or complex conjugate to each other and satisfy the relations $|f_1|=|f_2|=1$  or $f_1f_2^*=1$ respectively, where $f_1$ and $f_2$ are two distinct solutions for the Riccati system \eqref{riccati1}-\eqref{riccati2}. By substituting $q=ke^{i\alpha}$ and $f_1=\mu_1(x,t)e^{-i\alpha}$, $f_2=\mu_2(x,t)e^{-i\alpha}$ into \eqref{q3}, we have the following solution
\begin{eqnarray}\label{q3.p}
  q_{[3]}=\left(\frac{kF_2-2i\Lambda}{F_1^2}\right)F_2 e^{i\alpha},
\end{eqnarray}
where $\alpha(x,t)=a\left[x-\left(a-k^2\right)t\right]$, $\Lambda=\lambda_1^2-\lambda_2^2$, $F_1(x,t)=\lambda_2\mu_1-\lambda_1\mu_2$, $F_2(x,t)=\lambda_1\mu_1-\lambda_2\mu_2$. Here $\mu_1(x,t)$ and $\mu_2(x,t)$ are two distinct solutions, given by \eqref{mu}, for the Riccati equation with constant coefficients \eqref{riccatiC}.  The functions $\mu_1, \mu_2$ with the eigenvalues $\lambda_1, \lambda_2$ either hold $(R1)$ $\mu_1\mu_1^*=\mu_2\mu_2^*=1$
for $\lambda_1,\lambda_2\in\mathbb{R}$ and so $\Lambda$ is real or $(R2)$ $\mu_1\mu_2^*=1$ for $\lambda_2=\lambda_1^*$ and so $\Lambda$ is pure imaginary. The solution above can be rewritten as
\begin{eqnarray}\label{q3.pa}
 \left|q_{[3]}\right|^2=\left|k+2\frac{\Lambda}{\Omega_1}\right|^2,
\end{eqnarray}
where $\Omega_1=iF_2$ and $\Omega_2=iF_1$. This holds for both $(R1)$ and $(R2)$. This result is consistent with \cite{Steudel}. This can be rewritten as
\begin{eqnarray} \left|q_{[3]}\right|^2=k^2\pm2\frac{\Lambda}{\left|\Omega_1\right|^2}\left[2\Lambda+k\left(\Omega_1\pm\Omega_1^*\right)\right],
\end{eqnarray}
which holds for $(R1)$ and $(R2)$ respectively. An example for $(R1)$ is given below.

\subsubsection*{Periodic solution}  We have periodic solution for the choice $(R1)$ with $k^2=2a$ as
 \begin{eqnarray}
 \left|q_{[3]}\right|^2=k^2+\frac{m_0+m_1 \cos\beta_1+m_2 \cos\beta_2}{n_0+n_1\cos\beta_1+n_2\cos\beta_2+n_3\cos\beta_3},
\end{eqnarray}
where
\begin{eqnarray*}
 \beta_1&=&\left(a-2\lambda_1^2\right)\left[x+\left(a+2\lambda_1^2\right)t\right],\\
 \beta_2&=&\left(a-2\lambda_2^2\right)\left[x+\left(a+2\lambda_2^2\right)t\right],\\
 \beta_3&=&2\Lambda\left[x+2\left(\lambda_1^2+\lambda_2^2\right)t\right],
\end{eqnarray*}
and
\begin{eqnarray*}
 m_0&=&4\Lambda\left[\left(a-2\lambda_1^2\right)^2\left(a+2\lambda_2^2\right)-\left(a+2\lambda_1^2\right)\left(a-2\lambda_2^2\right)^2\right],\\
 m_1&=&-8k\lambda_1\Lambda\left(a-2\lambda_2^2\right)^2,\\
 m_2&=&8k\lambda_2\Lambda\left(a-2\lambda_1^2\right)^2,\\
 n_0&=&\left(a-2\lambda_1^2\right)^2\left(a+2\lambda_2^2\right)+\left(a+2\lambda_1^2\right)\left(a-2\lambda_2^2\right)^2
 -k^2\left(a-2\lambda_1^2\right)\left(a-2\lambda_2^2\right),\\
 n_1&=&4k\lambda_1\Lambda\left(a-2\lambda_2^2\right),\\
 n_2&=&-4k\lambda_2\Lambda\left(a-2\lambda_1^2\right),\\
 n_3&=&-4\lambda_1\lambda_2\left(a-2\lambda_1^2\right)\left(a-2\lambda_2^2\right).
\end{eqnarray*}
In the second case ($R2$), a similar result is obtained expressed in terms of sines-cosines and hyperbolic sines-cosines.

\section{Conclusion}

In this paper, we have established and proved a new theorem for Darboux transformation of operators with no derivative term.
This has a similar structure to the theorem on the standard Darboux transformation for general operators.
We have constructed solutions in quasideterminant forms for the DNLS equation.
These quasideterminants are expressed in terms of $f_i$ functions, where $f_i (i\in\mathbb{N})$ are solutions of Riccati systems.
It should be pointed out that these solutions are derived from linear eigenvalue problems $L(\Phi)=M(\Phi)=0$, where $\Phi=(\phi,\psi)^T$ and the linear operators $L$, $M$ are given in \eqref{LaxL}-\eqref{LaxM}. By letting $f_i=\psi_i / \phi_i$ in \eqref{qodd}-\eqref{oddmatrix} and \eqref{qeven}-\eqref{evenmatrix},
we easily write down the quasideterminant solutions in terms of solutions $\phi_i , \psi_i$ of linear partial differential equations (eigenvalue problems) as given in \cite{Xu}.

It should be emphasised that these solutions arise naturally from the Darboux transformation we present here.
Our theorem provides a natural and universal approach for operators with no derivative term.
Furthermore, for the DNLS equation, parametric and soliton solutions for zero and non-zero seeds have been presented here.
Finally, it is important to point out that our approach can be applied to other integrable systems in which their Lax operators have no derivative term.

\subsection*{Acknowledgments} The author (H.Y.) wishes to express his thanks to the School of Mathematics and Statistics, University of Glasgow, for hosting him as a Honorary Research Fellow.

\end{document}